\newtheorem{theorem}{Observation}
\newtheorem{lemma}[theorem]{Lemma}
\newcommand\norm[1]{\left\lVert#1\right\rVert}
\newcommand{\scq}{\mathcal{S}_{\rm C}}
\DeclareMathOperator{\tr}{Tr}
\newcommand{\id}{\mathds{1}}
\begin{document}
\title{Characterize arbitrary quantum networks in the noisy intermediate-scale quantum era}
\author{Zhen-Peng Xu}
\email{zhen-peng.xu@uni-siegen.de}
\affiliation{School of Physics and Optoelectronics Engineering, Anhui University, 230601 Hefei, China}
\affiliation{Naturwissenschaftlich-Technische
Fakult\"{a}t, Universit\"{a}t Siegen, Walter-Flex-Straße 3, 57068 Siegen, Germany}

\date{\today}

\pacs{03.65.Ta, 03.65.Ud}

\begin{abstract}
  Quantum networks are of high interest nowadays. In short, it describes the distribution of quantum sources represented by edges to different parties represented by nodes in the network. 
  Bundles of tools have been developed recently to characterize quantum states from the network in the ideal case. However, features of quantum networks in the noisy intermediate-scale quantum (NISQ) era invalidate most of them and call for feasible tools.
  By utilizing purity, covariance and topology of quantum networks, we provide a systematic approach to tackle with arbitrary quantum networks in the NISQ era, which can be noisy, intermediate-scale, random and sparse. 
  One application of our method is to witness the progress of essential elements in quantum networks, like the quality of multipartite entangled sources and quantum memory.
\end{abstract}
\maketitle

Numerous works have advertised from different scales the advent of quantum network technology, as small as the storage of a single entangled pair~\cite{li2021quantum}, and as broad as quantum internet~\cite{kimble2008quantum,sciarrino2012insight,wehner2018quantum}. 
Apart from the theoretical importance, quantum networks appear naturally in practice, especially in quantum key distribution~\cite{gisin2002quantum,yin2020entanglementbased}, quantum network metrology~\cite{komar2014quantum,proctor2018multiparameter,rubio2020quantum} and quantum distributed computation~\cite{caleffi2018quantum}.  
A recent move is into the characterization of different quantum correlations arising from quantum networks~\cite{navascues2020genuine,kraft2021quantum,hansenne2022symmetries,makuta2022no,wang2022quantum,renou2019genuine,pozas2019bounding,gisin2020constraints,aberg2020semidefinite,contreras2021genuine,pozas2022full,tavakoli2021bell,jones2021network}.

A quantum network can be abstracted as a hypergraph, where each node stands for a local lab and each hyperedge represents a quantum source that distributes particles only to labs associated with the corresponding nodes, see Fig.~\ref{fig:inflation} for examples. A correlated quantum network (CQN) allows the pre-shared classical protocol, i.e., global classical correlation~\cite{kraft2021quantum}, an independent quantum network (IQN) allows not.
Despite recent progress~\cite{navascues2020genuine,kraft2021quantum,hansenne2022symmetries,makuta2022no,wang2022quantum,renou2019genuine,pozas2019bounding,gisin2020constraints,aberg2020semidefinite,contreras2021genuine,pozas2022full,tavakoli2021bell,jones2021network}, the study of quantum network states is still in its cradle. 
Past research has focused mainly on IQN, bundles of tools~\cite{pozas2019bounding,kraft2019monogamy,gisin2020constraints,aberg2020semidefinite,kraft2021characterizing} have been added into the current toolbox. 
However, they become incapable to detect the underlying structure of CQN even when only a small amount of global classical correlation appears. 
In comparison, few methods~\cite{navascues2020genuine,kraft2021quantum,navascues2020genuine,hansenne2022symmetries,makuta2022no,wang2022quantum} exist for CQN, which either work only for special kinds of states like symmetric states~\cite{hansenne2022symmetries,wang2022quantum,makuta2022no}, limited quantum networks like the triangle network~\cite{navascues2020genuine,kraft2021quantum} or complete $n$-partite network with $(n-1)$-partite sources~\cite{mao2022test,coiteux2021no,coiteux2021any}. 
However, an undeniable fact is that we progress toward the noisy intermediate-scale quantum (NISQ) era, as pointed out sagaciously by Preskill~\cite{preskill2018quantum}. The global classical correlation exists then frequently in real applications, which can even elicit from the \textit{flap of a butterfly's wings in Brazil}~\cite{ott2008edward}. 

\begin{figure}[htpb]
  \centering
  \includegraphics[width=0.48\textwidth]{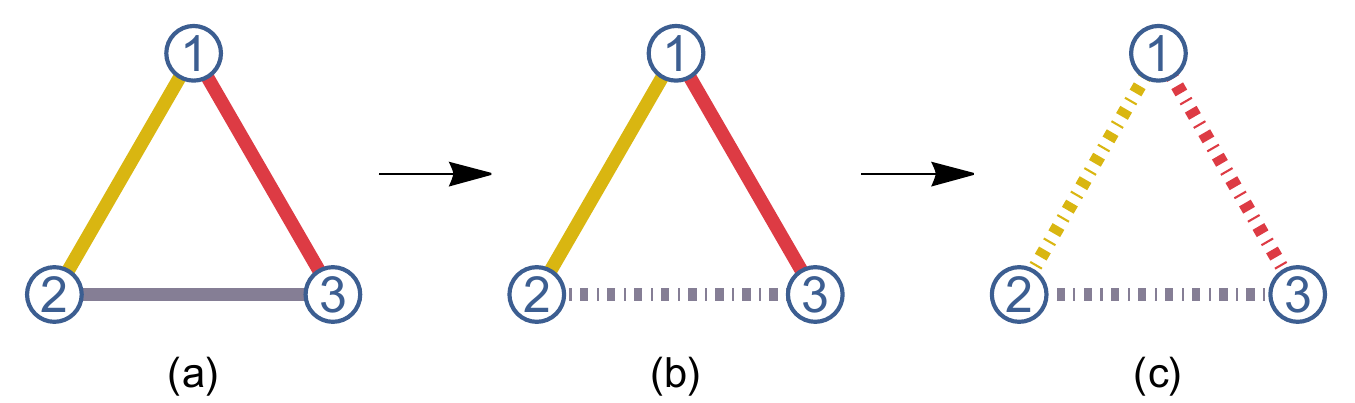}
  \caption{Three quantum networks, where each node stands for one local lab, one edge in real line represents a genuine bipartite entangled quantum source shared by different labs in the corresponding nodes, and one edge in dashed line represents a separable quantum source. In practice, the quantum network in (a) can degenerate to the one in (b), even to the one in (c).} 
  \label{fig:inflation}
\end{figure}

Apart from the unavoidable global noise, quantum networks in the NISQ era share at least other three features: intermediate-scale, random, and sparse. 
Though the size of quantum networks in the NISQ era is limited, it can be not small, considering that IBM has unveiled a quantum chip with $433$ qubits~\cite{choi2023ibm} already. The randomness in the network~\cite{wiersma2010random} can originate from the random establishment of quantum links with quantum repeaters~\cite{briegel1998quantum,sangouard2011quantum}, and also the decoherence of established links as considered in waiting time~\cite{vinay2019statistical}. Degeneration of a triangle quantum network until a classical network is illustrated in Fig.~\ref{fig:inflation}.
Since genuine multipartite entanglement is hard to prepare and to maintain~\cite{mooney2021generation,zhang2022quantum}, the realistic quantum networks will be sparse. Tools for quantum networks in the NISQ era regarding those features are still missing.

In this work, we characterize correlated quantum networks in the NISQ era by employing the purity of the state and covariance of the measured data. Those methods are operational in the sense that only the available experiment data is employed, without knowing the exact underlying quantum state. 
Purity of the network state plays an essential role here, as the classical correlation in a state can be captured by its purity. 
Pretty recent research shows that the purity of a multipartite state can be evaluated efficiently with only local operations~\cite{elben2020mixedstate,huang2020predicting}, which fits the network scenario. 
The methods developed here are feasible for noisy intermediate-scale or big quantum networks. Interestingly, they work even for a collection of networks with different kinds of topology, which can cover the random network models, especially the ones with probabilistic genuine bipartite sources as in the consideration of quantum repeaters~\cite{briegel1998quantum}. Thus, they  answer one corresponding open question in the review paper on nonlocality in quantum networks~\cite{tavakoli2021bell}. 
We can also apply our methods to a part of the network instead of the whole, which fits the sparse structure of the network in the NISQ era and reduces the difficulty of computation.

\textit{GHZ state under decoherence.---} As a warming-up exercise we discuss the Greenberger-Horne-Zeilinger (GHZ) state of $n$ qubits under decoherence, 
\begin{equation}\label{eq:ghz_d}
    \rho(\alpha) = (1-\alpha) |{\rm GHZ}_+\rangle\langle {\rm GHZ}_+| + \alpha |{\rm GHZ}_-\rangle\langle {\rm GHZ}_-|,
\end{equation}
where $|{\rm GHZ}_\pm\rangle = (|0\cdots 0\rangle \pm |1\cdots 1\rangle)/\sqrt{2}$, and $\alpha \in [0,1/2]$ describes the degree of decoherence. 
Despite its simplicity, this example allows us to introduce our main ideas.

If all the $n$ parties implement the same measurement $Z = |0\rangle\langle 0| - |1\rangle \langle 1|$, then two possible combinations of outcomes happen equally with probability $1/2$, i.e., either all of the outcomes are $0$, or all of them are $1$. 
To simulate this statistical behaviour without genuine $n$-partite entanglement, the state for simulation can only be $\rho_c = [|0\cdots 0\rangle\langle 0\cdots 0| + |1\cdots 1\rangle\langle 1\cdots 1|]/2$, since no other $0-1$ string appear as a combination of outcomes. Such a simulation invalidates known methods with only statistical data~\cite{aberg2020semidefinite,mao2022test,coiteux2021no,coiteux2021any}. It costs at least one classical bit of randomness, as the Shannon entropy or the Von Neumann entropy of $\rho_c$ is $1$. However, the Von Neumann entropy of the state $\rho(\alpha)$ is $-[\alpha\log\alpha+(1-\alpha)\log(1-\alpha)]$, which is strictly less than $1$ for $\alpha \in [0, 1/2)$. 
This means that we cannot simulate the statistical behaviour and the Von Neumann entropy of $\rho(\alpha)$ simultaneously by a quantum network with at most $(n-1)$-partite sources.

The Von Neumann entropy is one way to measure the purity of the state, capturing partially the classical correlations in the state. 
To continue, we examine firstly different measures of purity and choose a suitable one for our following methods.
For a given state $\rho$ in the $d$-dimensional space, the common measures of its purity~\cite{horodecki2013quantumness,gour2015resource,streltsov2018maximal} are
R\'{e}nyi $\alpha$-purity $\log_2 d - \log_2(\tr(\rho^\alpha))/(1-\alpha)$, which converges to the Von Neumann entropy as $\alpha$ tends to $1$, and linear entropy purity $\tr(\rho^2) - 1/d$.
Through the whole text, we take $\tau = \tr(\rho^2)$ to quantify the purity, which determines R\'{e}nyi 2-purity and linear entropy purity. The advantage of $\tau$ over other quantifiers, like the Von Neumann entropy, is that it fits the covariance of experimental data well in our approach, as both of them contain the information of $\rho^2$. 
As for the estimation of purity of a multipartite state with different measures, it can be done efficiently with only local operations~\cite{elben2020mixedstate,huang2020predicting}, which are feasible in the network scenario. 

\textit{Noisy quantum networks.---}\label{sec:covariance}
Noise is unavoidable for the quantum network states in the NISQ era, either the local noise or the global one. Quantum networks with different noise models can all be classified as CQNs.
Firstly, we develop the covariance matrix decomposition method for CQN, where a key step is to separate the part related to global classical correlation out in the whole covariance matrix.
For a given hypergraph $G(V,E)$ and a state  $\rho$ from CQN of $G$, the state $\rho$ can be decomposed as
\begin{align}\label{eq:cqn}
    &\rho = \sum\nolimits_k p_k \rho_k, \ \rho_k = \Big(\bigotimes\nolimits_{i\in V}\mathcal{C}_i^{(k)}\Big) \Big( \bigotimes\nolimits_{e\in E} \eta_e^{(k)} \Big),
\end{align}
where $\{p_k\}_k$ with $\sum_k p_k = 1$ and $p_k > 0$ is the global classical correlation, $\mathcal{C}_i^{(k)}$ is a local channel for the $i$-th party, $\eta_e^{(k)}$ is an entangled state distributed from the source labeled by the hyperedge $e$.

For simplicity, we assume each party has only one measurement, and denote $M_i$ the measurement for the $i$-th party. Then we introduce three kinds of covariance matrices, $\Gamma$, $\Gamma^{(k)}$ and  $\Gamma^{(c)}$, whose elements in the $i$-th row and $j$-th column are $\Gamma_{ij}$, $\Gamma_{ij}^{(k)}$ and  $\Gamma^{(c)}_{ij}$, respectively, where  
\begin{align}\label{eq:cov}
    \Gamma_{ij} &= \braket{M_iM_j} - \braket{M_i}\braket{M_j}, \ \braket{M_i} = \tr(\rho M_i), \nonumber\\
    \Gamma^{(k)}_{ij} &= \braket{M_iM_j}_k - \braket{M_i}_k\braket{M_j}_k,\ \braket{M_i}_k = \tr(\rho_k M_i),\nonumber\\
    \Gamma^{(c)}_{ij} &= \sum\nolimits_k p_k \braket{M_i}_k\braket{M_j}_k - \braket{M_i}\braket{M_j}.
\end{align}
The covariance matrix $\Gamma$ is the one that can be observed directly in experiments. The covariance matrices $\{\Gamma^{(k)}\}_k$ are hidden in the experimental data when we assume that the randomness of the sampling $\{p_k, \rho_k\}_k$ is inaccessible. The covariance matrix $\Gamma^{(c)}$ can be viewed as a classical covariance matrix, since it is only about the distribution of classical data $\{\braket{M_1}_k, \ldots, \braket{M_n}_k\}_k$. Throughout the whole paper, we only consider the dichotomic measurements with outcomes $\pm 1$. A pivotal observation is that the classical covariance matrix $\Gamma^{(c)}$ can be separated out from the observed one $\Gamma$ perfectly, i.e.,
\begin{align}\label{eq:dec}
    \Gamma = \sum\nolimits_k p_k \Gamma^{(k)} + \Gamma^{(c)},
\end{align}
whose proof can be found in Sec. A in Supplemental Material (SM)~\cite{sm}.
Since $\{\Gamma^{(k)}\}_k$ are about network states from IQN, the existing method in Ref.~\cite{aberg2020semidefinite} can be employed to impose constraints on them. However, if there is no limitation of $\Gamma^{(c)}$, the observed covariance matrix $\Gamma$ can still have arbitrary relation with the network topology $G$.
As it turns out, the purity of the state implies a nontrivial condition on $\Gamma^{(c)}$,
 leading to a semi-definite programming (SDP) to determine whether a state can arise from CQN with a given topology.
\begin{theorem}\label{thm:decmixed}
For a given state $\rho$ from the CQN with the network topology $G(V,E)$, measurements $\{M_i\}_{i\in V}$, which result in the covariance matrix $\Gamma$, it holds that
\begin{align}\label{eq:decmixed}
    &\Gamma = \sum\nolimits_{e\in E} \Upsilon_e + T,\ \Pi_e \Upsilon_e \Pi_e = \Upsilon_e \succeq 0,\nonumber\\
    &T \succeq 0,\ \max\nolimits_{i\in V} T_{ii} \le \beta,\ \tr(T) \le l_1 \beta,
\end{align}
where $l_1$ is the maximal eigenvalue of  $\sum_{i\in V} M_i\otimes M_i$, $\beta = 2\sqrt{1-\tau^2}$, $T_{ii}$ is the $i$-th diagonal term of $T$, $\Pi_e = \sum_{i\in e} P_i$ with $P_i$ to be the projection onto $i$-th row.
\end{theorem}
To apply the criterion in this observation, we need firstly estimate the purity of the state $\rho$, and then implement the measurements $\{M_i\}_i$ and obtain the covariance matrix from the experimental data.
It should work for arbitrary network topology with around $50$ nodes in practice. 
This observation can be understood as follows.
The term $\sum_{e\in E} \Upsilon_e$ corresponds to $\sum_k p_k \Gamma^{(k)}$, as each $\Gamma^{(k)}$ has a similar decomposition~\cite{aberg2020semidefinite}. The variable $T$ plays the role of $\Gamma^{(c)}$ and inherits all its constraints. A detailed proof is provided in Sec. B in SM~\cite{sm}. 
The application of Observation~\ref{thm:decmixed} to the triangle quantum network is illustrated in Fig.~\ref{fig:triangle_dec}.
We remark that the rank of the state determines the R\'{e}nyi-$0$ purity which reads $\log_2 (d/r)$.
By considering the rank $r$ also, we can set $\beta = \min \{r(1-\tau), 2\sqrt{1-\tau^2}\}$ as a tighter bound.

\begin{figure}
    \centering
    \includegraphics[width=0.33\textwidth]{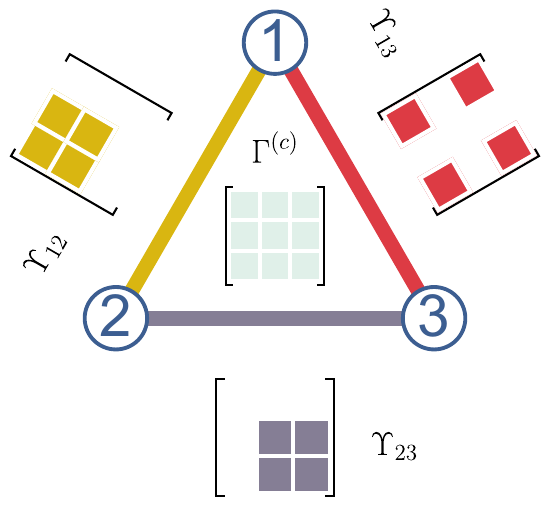}
    \caption{The decomposition of the covariance matrix $\Gamma$ of a noisy state from the triangle network, where each matrix contains $9$ elements, the elements in the blank area are $0$. The block structure of each $\Upsilon_e$ imposes a constraint on itself. The critical step is to obtain constraints of $\Gamma^{(c)}$ from available information of the quantum network, like purity.}
    \label{fig:triangle_dec}
\end{figure}

\textit{Revisit GHZ state under decoherence.---}
We take the state $\rho(\alpha)$ in Eq.~\eqref{eq:ghz_d} and measurements $Z$ for all parties as an example to illustrate Observation~\ref{thm:decmixed}.
The covariance matrix $\Gamma$ of $\rho(\alpha)$ contains always only $1$ as its elements.
  
If $\Gamma$ is from the statistics of a state in a network without $n$-partite sources, then $\Gamma$ should satisfy the decomposition in Eq.~\eqref{eq:decmixed}, where $G$ is the hypergraph with $n$ nodes and includes all subsets with $(n-1)$ elements as hyperedges. Notice that, the rank of $\Gamma$ is $1$, and $\Gamma \succeq \Upsilon_e \succeq 0$, which implies that each $\Upsilon_e$ should be proportional to $\Gamma$. 
  
Since $\Upsilon_e$ always contains element $0$ as exemplified in Fig.~\ref{fig:triangle_dec}, we have $\Upsilon_e = 0$, for all $e\in E$. Consequently, $T = \Gamma$ and $\max_i T_{ii} = 1$. 
The rank of the state $\rho(\alpha)$ is however $2$ and the purity is $\tau = 1-2\alpha+2\alpha^2$. Thus, $\beta = 1$ happens only for $\alpha=1/2$, in which case $\rho(\alpha)$ is fully separable. This leads to the conclusion that $\rho(\alpha)$ can arise from a network without $n$-partite sources if and only if $\alpha = 1/2$. Our criterion is therefore tight. 

\textit{Intermediate-scale networks.---}\label{sec:normineq}
The advantage of covariance matrix decomposition is that it requires only experimental data of few measurements and information of purity. However, the computation becomes heavy for intermediate-scale networks with around $500$ nodes, due to the complexity of SDP in the method.

Here we propose another approach to solve this issue, which can even take care of the randomness feature in the NISQ era.
Firstly, we introduce the fact that $\sum_{ij} |M_{ij}| \le r \tr(M)$ for a semidefinite matrix $M$ whose rank is $r$, and take the triangle network as an example.
According to the decomposition of $\Gamma$ in Observation~\ref{thm:decmixed}, $\sum_{ij} |\Gamma_{ij}| \le \sum_{ij}\sum_e [|\Upsilon_{e,ij}| + |\Gamma^{(c)}_{ij}|] \le \sum_e 2\tr(\Upsilon_{e}) + 3\tr(\Gamma^{(c)})$, where the last inequality is from the block structures of $\Upsilon_{e}$'s and $\Gamma^{(c)}$ as in Fig.~\ref{fig:triangle_dec}. Consequently, $\sum_{ij} |\Gamma_{ij}| \le 2\tr(\Gamma) + \tr(\Gamma^{(c)})$ by applying the first equality in Eq.~\eqref{eq:decmixed} again. 
For the general network topology $G(V,E)$ with $V=\{1, \ldots, n\}$ and $k$ to be the maximal size of hyperedges in $E$, we have 
\begin{equation}\label{eq:anyk}
     \sum\nolimits_{i,j} |\Gamma_{ij}| \le k\tr(\Gamma) + (n-k) \tr(\Gamma^{(c)}).
   \end{equation}
In practice, we can replace $\tr(\Gamma^c)$ in Eq.~\eqref{eq:anyk} by any estimation of it, like the analytical upper bound in Eq.~\eqref{eq:decmixed} results from series of relaxations~\cite{sm}. 
A good estimation plays a vital role in the efficiency of the inequality here, same as in the criterion in Observation~\ref{thm:decmixed}. 
Nowadays, it is still hard to prepare genuine multiparite entangled states for a large system~\cite{mooney2021generation}. Thus, $k$ is usually much smaller than $n$ in Eq.~\eqref{eq:anyk}, i.e., small sources in a big network.

\textit{Random networks.---}
The establishment of genuine multipartite entanglement among remote labs is usually random as in the scenario of quantum repeaters~\cite{briegel1998quantum}. The established one can still degenerate to less-partite ones randomly due to decoherence. This urges us to introduce the concept of random network, where the genuine multipartite entanglement in each source exists probabilistically. 
As an example, we consider a genuine tripartite entangled source, whose degeneration is captured by the triangle network in Fig.~\ref{fig:triangle_dec}, assumed to be with probability $p$. 
The network state $\rho$ has then the decomposition $\rho = p \sum_{i=1}^3 q_i\rho_i + (1-p)\rho_0$, where $\rho_0$ is the original tripartite state and other $\rho_i$'s are independent triangle network states, $\sum_i q_i = 1$ and $q_i \ge 0$. 
This leads to the covariance matrix $\Gamma = p\sum_{i=1}^3 q_i\Gamma^{(i)} + (1-p)\Gamma^{(0)} + \Gamma^{(c)}$, where $\Gamma^{(i)}$'s are the covariance matrices for $\rho_i$'s, and $\Gamma^{(c)}$ is the classical one. As argued before, $\tilde{\Gamma} :=\sum_{i=1}^3 q_i\Gamma^{(i)}$ has the decomposition $\sum_{e\in E} \Upsilon_e$ as in Fig.~\ref{fig:triangle_dec}, which implies that $\sum_{i,j=1}^3 |\tilde{\Gamma}_{ij}| \le 2 \tr(\tilde{\Gamma})$. Consequently, 
\begin{align}
    \sum\nolimits_{i,j} |\Gamma_{ij}| &\le 2 p\tr(\tilde{\Gamma}) + 3[(1-p)\tr(\Gamma^{(0)}) + \tr(\Gamma^{(c)})]\nonumber\\
    &=2\tr(\Gamma) + [(1-p)\tr(\Gamma^{(0)}) + \tr(\Gamma^{(c)})]\nonumber\\
    &\le 2\tr(\Gamma) + 3(1-p) + \tr(\Gamma^{(c)}),
\end{align}
where the last inequality is from the fact that any variance should be no more than $1$ as the outcomes of the measurements are $\pm 1$. 

This result is the very first characterization of random quantum network states, which can be generalized to arbitrary random quantum networks as follows.
\begin{theorem}\label{thm:random}
   Assume $\rho$ is a state from the random quantum network with $n$ parties and $c_k$ genuine $k$-partite sources on average for each $k$. If $\Gamma$ is a covariance matrix of measurements whose outcomes are $\pm 1$, then
  \begin{align}\label{eq:difftopo2}
      \sum\nolimits_{i, j} |\Gamma_{ij}| \le  &\max \sum\nolimits_k k x_k + y\nonumber\\
      \text{such that }&\sum\nolimits_k x_k + y = \tr(\Gamma),\nonumber\\
      &0\le x_k \le k c_k,\nonumber\\
      &0 \le y = \tr(\Gamma^{(c)}).
  \end{align}
\end{theorem}
The proof is in Sec.~C in SM~\cite{sm}. Equation~\eqref{eq:difftopo2} is one inequality including a linear programming, which can be verified efficiently even for large random networks. 

The criterion in Observation~\ref{thm:random} is device-independent, in the sense that it works without any assumption of the underlying quantum system and measurements.
Besides, it does not depend on the exact underlying network topology, but the parameters $\{c_k\}$.
Such results can also be used to benchmark the quality of genuine multipartite entanglement, which degenerates randomly due to decoherence. In such a case, parameters $\{c_k\}$ should be functions of time. Observation~\ref{thm:random} answers an open question in Ref.~\cite{tavakoli2021bell} also, i.e., how to characterize the mixture of quantum networks with different kinds of topology. 

\textit{Sparse networks.---} In a reasonable prospect, the quantum network should be sparse in the near future. Even though we have a relatively large quantum network, the size and the amount of quantum sources would be relatively small as illustrated in Fig.~\ref{fig:sparse}. The exact numbers depend on the progress of quantum technologies.

\begin{figure}[h]
    \centering
    \includegraphics[width=0.315\textwidth]{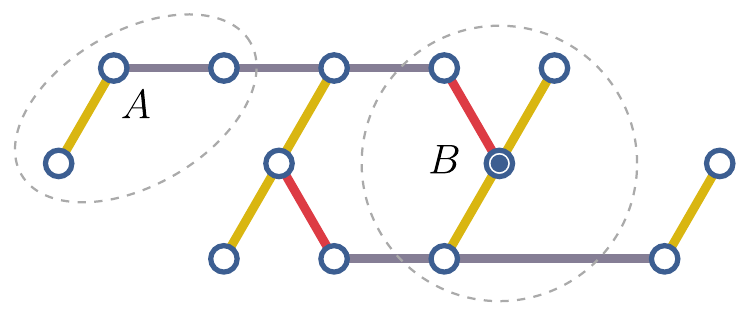}
    \caption{A sparse network $G_s$ with $13$ nodes, which contains an angle (in region $A$) as a sub-network. The centeral node in region $B$ is the filled one.}
    \label{fig:sparse}
\end{figure}

For a given state $\rho$ from a large network $G(V,E)$, a necessary condition is that $\rho_S := \tr_{\bar{S}} \rho$ can arise from the network with the induced subgraph $H(S,E_S)$ on $S$, where $S$ is an arbitrary subset of $V$ and $E_S = \{e\cap S|e\in E\}$.
Then we could apply Observations~\ref{thm:decmixed} and \ref{thm:random} for each reduced state $\rho_S$. The sparsity of the network can reduce the complexity of this approach, as we do not need too many relatively small sub-networks to cover the original one. 
Correspondingly, we have to estimate the classical covariance matrix $\Gamma^{(c)}_S$ for each $\rho_S$. Since $\rho_S$ as a reduced state could be much mixed than the original state $\rho$, the purity of $\rho_S$ could lead to too loose constraints in Observation~\ref{thm:decmixed}. A key observation here is that $\Gamma^{(c)}_{S,ij} = \Gamma^{(c)}_{ij}$ for any $i,j \in S$, with $\Gamma^{(c)}$ to be the classical covariance matrix corresponding to the original network state $\rho$. More explanations are provided in Sec. D in SM~\cite{sm}.

For instance, we consider the state $\rho(\alpha)$ as in Eq.~\eqref{eq:ghz_d} for $13$ qubits and network $G_s$ with only bipartite sources as shown in Fig.~\ref{fig:sparse}. Then the reduced state $\rho_A(\alpha) = (|000\rangle\langle 000| + |111\rangle\langle 111|)/2$ for the three qubits in region $A$ as in Fig.~\ref{fig:sparse}. The rank of $\rho_A(\alpha)$ is two, and the covariance matrix of $\rho_A(\alpha)$ contains only $1$. 
Notice that, the sub-network in the region $A$ is a special case of the triangle network as illustrated in Fig.~\ref{fig:inflation}. The same argument as before implies that $\Gamma^{(c)}_{A}$ contains $1$ only, which does not contradict with the purity of $\rho_A(\alpha)$, but with the one of the original state $\rho(\alpha)$ for $\alpha \in [0,1/2)$. Thus, with the global purity and the statistical data of the qubits in the small region $A$, we obtain the same tight result. This strategy saves much effort, since we only need to measure few qubits in a large network.

There is another approach to employ sub-networks to determine whether a state can arise from the original network or not, i.e., we measure out one party associated with one node $v$ and broadcast the outcomes. Then we can treat the party associated with node $v$ and all the related sources together as a new multipartite source, which distributes particles to all the parties in ${\cal N}(v) := \{u| (u,v) \in E\}$. 
If the original network is sparse, then the size of ${\cal N}(v)$, i.e., the size of the new introduced entangled source is usually not big.
For example, if $v$ is the central node in region $B$ as in Fig.~\ref{fig:sparse}, then the new source distributes particles to parties associated with all the other three nodes in the region $B$.
We can do this procedure for a subset $S$ of nodes sequentially.
By applying Observations~\ref{thm:decmixed} and \ref{thm:random} for the resulting sub-network, we can obtain new criteria for the original network state.

\textit{Discussion.---}\label{sec:conclusion}
Quantum networks work as a playground for various quantum technologies, like quantum repeaters and quantum memory. Concerning the real-life implementation of quantum networks, we examine them in the background of the noisy intermediate-scale quantum (NISQ) era. In this paper, we have focused on four aspects of such quantum networks, that is, they should be noisy, intermediate-scale, random, and sparse. We developed operational methods based on purity and covariance to address all those four features.

There exist already methods to tackle with the noisy quantum network states, e.g., the witness based on fidelity~\cite{navascues2020genuine,kraft2021quantum,makuta2022no,wang2022quantum} and nonlocality inequalities~\cite{coiteux2021any,coiteux2021no,mao2022test}. However, the witness based on fidelity works  mostly either for small networks or special states like graph states in practice. The nonlocality inequalities are designed specially for the $n$-partite network with all $(n-1)$-partite sources. In comparison, our methods work for any kind of network topology, by employing  experiment data only, without knowing the exact underlying quantum state and measurements. Nevertheless, quantum theory is assumed here, which is another difference between our consideration and network nonlocality.

`Quantum technologists should continue to strive for more accurate quantum gates and, eventually, fully fault-tolerant quantum computing'~\cite{preskill2018quantum} and networks, for which our methods can provide the witness.

\textit{Acknowledgements.---}
I thank David Gross, H Chau Nguyen, Julio I. de Vicente, Kiara Hansenne, Mariami Gachechiladze, Nikolai Wyderka, Otfried G\"{u}hne, Shu-Heng Liu, Sixia Yu, Tristan Kraft, especially Laurens Ligthart and Thomas Cope for discussions and suggestions. Moreover, I would like to thank Tristan Kraft for careful reading of the manuscript. This work was supported by {National Natural Science Foundation of China} (Grant No.\ 12305007) and 
Anhui Provincial Natural Science Foundation (Grant No.\ 2308085QA29), the Deutsche Forschungsgemeinschaft (DFG, German Research Foundation, project numbers 447948357 and 440958198), the Sino-German Center for Research Promotion (Project M-0294), the ERC (Consolidator Grant 683107/TempoQ),  the Alexander Humboldt foundation.

\onecolumngrid
\newpage

\begin{center}
    \textbf{Supplemental Material of\\ ``Characterize arbitrary quantum networks in the noisy intermediate-scale quantum era"}\\ \vspace{0.3em}
    Zhen-Peng Xu\\
    \textit{School of Physics and Optoelectronics Engineering, Anhui University, 230601 Hefei, China and}\\
    \textit{Naturwissenschaftlich-Technische Fakult\"{a}t, Universit\"{a}t Siegen, Walter-Flex-Stra\ss e 3, 57068 Siegen, Germany}
\end{center}

\setcounter{theorem}{4}
\section{A. Covariance Matrix of mixed state}\label{sec:covariance}
For a given state $\rho = \sum_k p_k \rho_k$ and a set of observables $\{M_i\}$, denote 
\begin{align}
    &\Gamma_{ij} = \tr(M_iM_j\rho) - \tr(M_i\rho)\tr(M_j\rho),\\ 
    &\Gamma_{ij}^{(k)} = \tr(M_iM_j\rho_k) - \tr(M_i\rho_k)\tr(M_j\rho_k),\\ 
    &\Gamma_{ij}^c = \sum_k p_k \tr(M_i\rho_k)\tr(M_j\rho_k) - \tr(M_i\rho)\tr(M_j\rho).
\end{align}
Then we have
\begin{equation}
    \Gamma_{ij} = \sum_k p_k \Gamma_{ij}^{(k)} + \Gamma_{ij}^c.
\end{equation}
Besides, the matrices $\Gamma = (\Gamma_{ij})$,  $\Gamma^{(k)} = (\Gamma_{ij}^{(k)})$,  $\Gamma^c = (\Gamma_{ij}^c)$ are positive semi-definite. 
\section{B. Estimation of mean value and variance}

\begin{lemma}
  If the mixed state $\rho$ has the decomposition
   \begin{equation}
    \rho = \sum_k p_k \rho_k, p_k > 0,
  \end{equation}
  then
   \begin{equation}
     \operatorname{rank}(\rho-\rho_k) \le \operatorname{rank}(\rho).
  \end{equation}
\end{lemma}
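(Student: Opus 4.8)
The plan is to prove the rank inequality $\operatorname{rank}(\rho - \rho_k) \le \operatorname{rank}(\rho)$ by showing that the support of $\rho_k$ is contained in the support of $\rho$, so that $\rho - \rho_k$ lives in the same subspace as $\rho$. First I would note that since $p_k > 0$ and $\rho = \sum_j p_j \rho_j$ with every $\rho_j \succeq 0$, we have $\rho - p_k \rho_k = \sum_{j \neq k} p_j \rho_j \succeq 0$, hence $\rho \succeq p_k \rho_k \succeq 0$. The standard operator-monotonicity fact about supports then gives $\operatorname{supp}(\rho_k) \subseteq \operatorname{supp}(\rho)$: indeed, if $\ket{\psi}$ lies in the kernel of $\rho$, then $0 = \bra{\psi}\rho\ket{\psi} \ge p_k \bra{\psi}\rho_k\ket{\psi} \ge 0$, so $\bra{\psi}\rho_k\ket{\psi} = 0$, and since $\rho_k \succeq 0$ this forces $\rho_k \ket{\psi} = 0$. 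Thus $\ker(\rho) \subseteq \ker(\rho_k)$, equivalently $\operatorname{supp}(\rho_k) \subseteq \operatorname{supp}(\rho)$.

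Next I would conclude the rank bound. Both $\rho$ and $\rho_k$ have their ranges contained in $W := \operatorname{supp}(\rho)$, and since they are Hermitian their ranges coincide with their supports; therefore $\operatorname{range}(\rho - \rho_k) \subseteq W$ as well. Hence
\begin{equation}
  \operatorname{rank}(\rho - \rho_k) = \dim \operatorname{range}(\rho - \rho_k) \le \dim W = \operatorname{rank}(\rho),
\end{equation}
which is exactly the claim. Alternatively, one can argue directly: $\operatorname{rank}(\rho - \rho_k) \le \operatorname{rank}(\rho) + \operatorname{rank}(\rho_k)$ is too weak, but restricting everything to the subspace $W$ makes both $\rho$ and $\rho_k$ operators on a space of dimension $\operatorname{rank}(\rho)$, so their difference has rank at most $\operatorname{rank}(\rho)$ automatically.

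I do not expect any serious obstacle here; the only subtlety worth spelling out carefully is the support-containment step, i.e. that $\rho \succeq p_k\rho_k \succeq 0$ together with $p_k>0$ implies $\ker(\rho)\subseteq\ker(\rho_k)$, which relies on positive semi-definiteness (the inequality $A \succeq B \succeq 0$ does imply $\operatorname{supp}(B)\subseteq\operatorname{supp}(A)$, but the analogous statement fails without the $B\succeq 0$ assumption). Everything else is linear algebra on the finite-dimensional support of $\rho$.
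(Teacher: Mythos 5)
Your proof is correct and follows essentially the same route as the paper: both arguments reduce the claim to showing $\operatorname{supp}(\rho_k)\subseteq\operatorname{supp}(\rho)$ and then conclude that $\rho-\rho_k$ acts on a space of dimension $\operatorname{rank}(\rho)$. The only difference is cosmetic — you derive the support containment from the operator inequality $\rho \succeq p_k\rho_k$ and a kernel argument, while the paper compresses each $\rho_k$ by the support projection $\Pi$ of $\rho$; your version is, if anything, slightly cleaner at that step.
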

\begin{proof}
  Denote $\Pi$ the projection onto the subspace $S_{\rho}$ spanned by the eigenstates of $\rho$.  
  By definition,
  \begin{equation}
    \Pi \rho \Pi = \rho,
  \end{equation}
  which implies that
  \begin{equation}
    \sum_k p_k (\rho_k - \Pi\rho_k\Pi) = 0. 
  \end{equation}
  Since $\rho_k - \Pi\rho_k\Pi \succeq 0$, we have $\Pi\rho_k\Pi = \rho_k$, $\forall k$.
  Hence, $\operatorname{rank}(\rho-\rho_k)$ is no more than the dimension of $S_{\rho}$, which, by definition, is  $\operatorname{rank}(\rho)$.
\end{proof}

\begin{lemma}\label{le:traced}
  For any projector $P$, and two states $\rho, \sigma$ where $\sigma$ is in the range of $\rho$, denote $M = 2P - \id$, we have
   \begin{equation}
     \tr(M(\rho-\sigma)) \le \sqrt{R} \norm{\rho-\sigma}_{\rm F}  ,
  \end{equation}
  where $R = \max_{n\le \operatorname{rank}(\sigma)} \frac{4n(r-n)}{r}$, $r = \operatorname{rank}(\rho)$ and $ \norm{\cdot}_{\rm F}$ is the Frobenius norm.
\end{lemma}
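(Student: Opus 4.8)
The plan is to set $D=\rho-\sigma$ and combine the trace-norm geometry of its Jordan decomposition $D=D_+-D_-$ with a rank bound on $D_-$ coming from the hypothesis that $\sigma$ is supported in the range of $\rho$.

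First I would use that $\rho,\sigma$ are both states, so $\tr D=0$, giving $\tr(M(\rho-\sigma))=\tr((2P-\id)D)=2\tr(PD)$. Writing $D_\pm\succeq 0$ with mutually orthogonal supports and setting $a:=\tr D_+=\tr D_-$ (equal because $\tr D=0$), and using $0\preceq P\preceq\id$, one gets $\tr(PD)=\tr(PD_+)-\tr(PD_-)\le\tr(D_+)=a$, hence $\tr(M(\rho-\sigma))\le 2a$. It then remains to bound $a$ by $\norm{D}_{\rm F}$.

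The support hypothesis enters through a rank bound on $D_-$. Let $\Pi_-$ be the projector onto the support of $D_-$ and $n:=\operatorname{rank}(D_-)$; on that subspace $\Pi_-D\Pi_-=-D_-$, so $\Pi_-\sigma\Pi_-=\Pi_-\rho\Pi_--\Pi_-D\Pi_-=\Pi_-\rho\Pi_-+D_-\succeq D_-\succ 0$, i.e.\ $\Pi_-\sigma\Pi_-$ is positive definite on an $n$-dimensional subspace, which forces $n\le\operatorname{rank}(\sigma)$. Since $D=\rho-\sigma$ is supported in the range of $\rho$, of dimension $r$, and $D_+,D_-$ have orthogonal supports there, also $\operatorname{rank}(D_+)\le r-n$. (If $D=0$ both sides of the claimed inequality vanish, so one may assume $1\le n\le r-1$.)

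Finally, for any positive semidefinite $X$ one has $\norm{X}_{\rm F}^2\ge(\tr X)^2/\operatorname{rank}(X)$ by Cauchy--Schwarz; applying this to $D_+$ and $D_-$ gives $\norm{D}_{\rm F}^2=\norm{D_+}_{\rm F}^2+\norm{D_-}_{\rm F}^2\ge a^2/(r-n)+a^2/n=a^2\, r/(n(r-n))$, so that $2a\le\sqrt{4n(r-n)/r}\,\norm{D}_{\rm F}\le\sqrt{R}\,\norm{D}_{\rm F}$ by the definition of $R$. Combined with $\tr(M(\rho-\sigma))\le 2a$ this is the assertion. I expect the main obstacle to be the rank-bounding step: one must keep track of the fact that it is $D_-$, the positive part of $\sigma-\rho$, and not $D_+$, whose rank is controlled by $\operatorname{rank}(\sigma)$, and the orthogonality of the supports of $D_\pm$ is exactly what makes the restricted operator inequality on $\Pi_-$ clean; everything else is routine.
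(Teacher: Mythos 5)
Your argument is correct and follows essentially the same route as the paper: bound $\tr(PD)$ by $\tr D_+$, apply Cauchy--Schwarz to the positive and negative eigenvalues of $D=\rho-\sigma$ separately, and control the ranks of $D_\pm$ via $n\le\operatorname{rank}(\sigma)$ and $\operatorname{rank}(D_+)\le r-n$. The only difference is cosmetic: you supply an explicit proof of the bound $n\le\operatorname{rank}(\sigma)$ (via positivity of $\Pi_-\sigma\Pi_-$ on the support of $D_-$), which the paper simply cites from the literature.
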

\begin{proof}
  \begin{align}\label{eq:singular}
    \tr(P(\rho-\sigma)) = \tr(P(\rho-\sigma)^{+}) - \tr(P(\rho-\sigma)^{-})
                        \le \tr(P(\rho-\sigma)^{+})
                        \le \tr((\rho-\sigma)^{+})
                        = \sum_{\lambda_i>0} \lambda_i,
\end{align}
where $(\rho-\sigma)^{\pm}$ are the non-negative and negative part of  $\rho-\sigma$, that is, 
\begin{equation}
   \rho-\sigma = (\rho-\sigma)^{+} - (\rho-\sigma)^{-}.
\end{equation}
$\tr(\rho - \sigma) = 0$ implies that
\begin{equation}
  \sum_{\lambda_i>0} \lambda_i + \sum_{\lambda_i<0} \lambda_i = 0.
\end{equation}
Denote $p, n$ the number of positive eigenvalues and the number of negative eigenvalues of  $\rho-\sigma$, respectively.
\begin{align}\label{eq:frobenius}
   \tr((\rho-\sigma)^2) &= \sum_{\lambda_i>0} \lambda_i^2 + \sum_{\lambda_i<0} \lambda_i^2 \ge \frac{1}{p} \left(\sum_{\lambda_i>0}\lambda_i\right)^2 + \frac{1}{n} \left(\sum_{\lambda_i<0}\lambda_i\right)^2 = \frac{n+p}{np} \left(\sum_{\lambda_i>0}\lambda_i\right)^2.
\end{align}
Combining Eq.~\eqref{eq:singular} and Eq.~\eqref{eq:frobenius}, we have 
\begin{align}
    \tr(P(\rho-\sigma)) \le \sqrt{\frac{np}{n+p}} ||\rho-\sigma||_{\rm F} \le \sqrt{\frac{n(r-n)}{r}} ||\rho-\sigma||_{\rm F},
\end{align}
since $n+p \le r$.
Notice that $\tr(M(\rho-\sigma)) = 2\tr(P (\rho-\sigma))$ and $n \le \operatorname{rank}(\sigma)$ as observed in Ref.~\cite{coles2019strong}, we finish the proof.
\end{proof}

\begin{theorem}\label{thm:puritybound}
  For a given set of dichotomic observables $\{M_i\} $ with outcome $\pm 1$ and a mixed state $\rho = \sum_k p_k \rho_k$, we have
  \begin{equation}
    \left|\sum_k p_k \braket{M_i}_k \braket{M_j}_k - \braket{M_i}\braket{M_j} \right| \le R(\tau_0-\tau),
  \end{equation}
  where $R = \max_{n\le n_0} \frac{4n(r-n)}{r}$, $r = \operatorname{rank}(\rho)$, $n_0 = \max_k \operatorname{rank}(\rho_k) \le r$, $\tau_0$ is the average purity, i.e,
 $\tau_0 = \sum_k p_k \tr(\rho_k^2)$.
\end{theorem}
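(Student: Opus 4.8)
The plan is to recognise the left-hand side as a classical covariance over the mixing weights $\{p_k\}$ and to control the two associated variances by the purity gap $\tau_0-\tau$ via Lemma~\ref{le:traced}. Set $\mu_i=\braket{M_i}$ and $\delta_i^{(k)}=\braket{M_i}_k-\mu_i$. Since $\rho=\sum_k p_k\rho_k$ one has $\sum_k p_k\delta_i^{(k)}=0$, so expanding $\braket{M_i}_k\braket{M_j}_k=(\mu_i+\delta_i^{(k)})(\mu_j+\delta_j^{(k)})$ and averaging against $p_k$ makes the cross terms vanish, leaving $\sum_k p_k\braket{M_i}_k\braket{M_j}_k-\mu_i\mu_j=\sum_k p_k\delta_i^{(k)}\delta_j^{(k)}$. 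Weighted Cauchy--Schwarz then gives $\bigl|\sum_k p_k\delta_i^{(k)}\delta_j^{(k)}\bigr|\le\sqrt{V_iV_j}$ with $V_i:=\sum_k p_k\bigl(\delta_i^{(k)}\bigr)^2$, so it suffices to show $V_i\le R(\tau_0-\tau)$ for every $i$.

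Next I would bound a single $\delta_i^{(k)}=\tr\!\bigl(M_i(\rho-\rho_k)\bigr)$. Writing $M_i=2P_i-\id$ for a projector $P_i$, and using the rank lemma established just above --- each $\rho_k$ is supported in the range of $\rho$, whence $\operatorname{rank}(\rho_k)\le r$ (so $n_0\le r$), $\operatorname{rank}(\rho-\rho_k)\le r$, and the number of negative eigenvalues of $\rho-\rho_k$ is at most $\operatorname{rank}(\rho_k)\le n_0$ --- Lemma~\ref{le:traced} applied to $(\rho,\rho_k)$ with $P_i$, and once more with the complementary projector $\id-P_i$ to cover the opposite sign, yields $|\delta_i^{(k)}|\le\sqrt{R}\,\norm{\rho-\rho_k}_{\rm F}$ with the \emph{single} constant $R=\max_{n\le n_0}\tfrac{4n(r-n)}{r}$. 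Squaring and averaging gives $V_i\le R\sum_k p_k\norm{\rho-\rho_k}_{\rm F}^2$.

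Finally, a direct expansion using $\sum_k p_k\rho_k=\rho$ gives $\sum_k p_k\tr\!\bigl((\rho-\rho_k)^2\bigr)=\tr\rho^2-2\tr\rho^2+\sum_k p_k\tr\rho_k^2=\tau_0-\tau$, so $V_i\le R(\tau_0-\tau)$, and combining with the Cauchy--Schwarz step proves $|\Gamma_{ij}^c|\le R(\tau_0-\tau)$. The main obstacle is the step above: making a \emph{single} constant $R$ work uniformly over all components $\rho_k$ and over both signs of $\delta_i^{(k)}$. This is precisely what the rank bookkeeping delivers --- the support inclusion bounds $\operatorname{rank}(\rho-\rho_k)$ by $r$, and the observation from Ref.~\cite{coles2019strong} (already invoked in Lemma~\ref{le:traced}) bounds the negative part of $\rho-\rho_k$ in rank by $\operatorname{rank}(\rho_k)\le n_0$, so that $\max_{n\le n_0}\tfrac{4n(r-n)}{r}$ dominates in every case; everything else is a routine expansion.
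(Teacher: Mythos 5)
Your proof is correct and takes essentially the same route as the paper's: rewrite the left-hand side as $\sum_k p_k(\braket{M_i}_k-\braket{M_i})(\braket{M_j}_k-\braket{M_j})$, bound each deviation by $\sqrt{R}\,\norm{\rho-\rho_k}_{\rm F}$ via Lemma~\ref{le:traced} together with the support/rank bookkeeping, and conclude with the identity $\sum_k p_k\tr\bigl((\rho-\rho_k)^2\bigr)=\tau_0-\tau$. The only differences are cosmetic: you insert a weighted Cauchy--Schwarz over $k$ where the paper directly bounds $|\braket{M_i}_k-\braket{M_i}|\,|\braket{M_j}_k-\braket{M_j}|\le R\,\norm{\rho-\rho_k}_{\rm F}^2$ termwise, and you make explicit (via the complementary projector) the two-sided use of the one-sided Lemma~\ref{le:traced}, which the paper leaves implicit.
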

\begin{proof}
\begin{align*}
  \left|\sum_k p_k \braket{M_i}_k \braket{M_j}_k - \braket{M_i}\braket{M_j} \right| &=   \left|\sum_k p_k (\braket{M_i}_k - \braket{M_i})(\braket{M_j}_k-\braket{M_j})\right|\\
                                                                                   &\le \sum_k p_k  |{\braket{M_i}_k-\braket{M_i}}|  |{\braket{M_j}_k-\braket{M_j}}|\\
                                                                                   &\le \sum_k p_k \left( \sqrt{R}  \norm{\rho-\rho_k}_{\rm F}   \right)^2\\
                                                                                   &= R\sum_k p_k \tr((\rho-\rho_k)^2)\\
                                                                                   &= R \sum_k p_k (\tr(\rho_k^2)+\tr(\rho^2)-2\tr(\rho\rho_k))\\
                                                                                   &= R(\tau_0-\tau).
\end{align*}
\end{proof}
Note that $R \le r$ and $\tau_0 \le 1$.
In the case that $\rho_k$'s are all pure states,  $\tau_0 = 1$ and $R \le \min\{4(1-1/r), r\}$.

{\color{black}
We have two remarks. Firstly, here we have made use of the rank $r$ and the quantifier of purity $\tau = \tr(\rho^2)$ to provide an upper bound. In principle, there could be other quantifiers of purity which can be employed in a tighter bound. The crucial point is how to get rid of the exact decomposition in the procedure of relaxation, since only the final state $\rho$ is assumed to be available.
Secondly, another measure of purity is the single-shot distillable purity $\mathcal{P}_d^1(\rho)$, which equals to $\lfloor \log_2 (d/r) \rfloor$~\cite{streltsov2018maximal}. Since the rank $r$ determines $\mathcal{P}_d^1(\rho)$, but not the other way around. If we employ $\mathcal{P}_d^1(\rho)$ instead of the rank $r$, the results might be less accurate in principle. Hence, it is also a key point that how to chose and combine different measures of purity to extract more information.
}

\begin{theorem}
  For a set of hermitian operators $\{M_i\} $, a mixed state $\rho = \sum_k p_k \rho_k$,
  \begin{equation}\label{eq:tighttrace}
    \tr(\Gamma_p) \le l_1 r (\tau_0-\tau),
  \end{equation}
  where $l_1$ is the maximal singular value of  $\sum_i M_i\otimes M_i$, $r,\tau, \tau_0$ are the rank and purity of  $\rho$ and the average purity of the decomposition, respectively.
\end{theorem}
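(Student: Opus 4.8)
The plan is to reduce $\tr(\Gamma_p)$ to the trace of a product of operators on the doubled Hilbert space $\mathcal H\otimes\mathcal H$, and then to exploit the tensor‑product structure together with the rank bound $\operatorname{rank}(\rho-\rho_k)\le\operatorname{rank}(\rho)=r$ proved at the beginning of this appendix.

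First I would use $\tr(M_i\rho_k)\,\tr(M_j\rho_k)=\tr[(M_i\otimes M_j)(\rho_k\otimes\rho_k)]$ to rewrite the entries of the classical part as
\[
  (\Gamma_p)_{ij}=\tr\big[(M_i\otimes M_j)\,(\beta_1-\beta_2)\big],\qquad \beta_1=\sum_k p_k\,\rho_k\otimes\rho_k,\quad \beta_2=\rho\otimes\rho ,
\]
so that $\tr(\Gamma_p)=\sum_i(\Gamma_p)_{ii}=\tr\big[\big(\sum_i M_i\otimes M_i\big)(\beta_1-\beta_2)\big]$. The key algebraic step is the identity
\[
  \beta_1-\beta_2=\sum_k p_k\,\Delta_k\otimes\Delta_k,\qquad \Delta_k:=\rho_k-\rho ,
\]
obtained by expanding $\rho_k\otimes\rho_k=(\rho+\Delta_k)\otimes(\rho+\Delta_k)$ and using $\sum_k p_k\Delta_k=0$ to kill the two linear cross terms $\rho\otimes\Delta_k$ and $\Delta_k\otimes\rho$. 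I expect this cancellation to carry the real weight: it is precisely what removes the spurious factor of $n$ (the number of observables) that one would incur by summing the entrywise estimate of Observation~\ref{thm:puritybound} over $i$.

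Next I would invoke the matrix H\"older inequality $|\tr(XY)|\le\norm{X}_\infty\trnorm{Y}$ with $X=\sum_i M_i\otimes M_i$ — which is Hermitian, hence $\norm{X}_\infty$ coincides with its maximal singular value $l_1$ — and $Y=\sum_k p_k\,\Delta_k\otimes\Delta_k$. Combined with the triangle inequality for $\trnorm{\cdot}$ and the multiplicativity of the trace norm under tensor products, $\trnorm{\Delta_k\otimes\Delta_k}=\trnorm{\Delta_k}^2$, this gives $\tr(\Gamma_p)\le l_1\sum_k p_k\trnorm{\Delta_k}^2$.

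Finally, since $\operatorname{rank}(\Delta_k)=\operatorname{rank}(\rho-\rho_k)\le r$, the Cauchy--Schwarz inequality applied to the at most $r$ nonzero singular values of $\Delta_k$ yields $\trnorm{\Delta_k}^2\le r\,\norm{\Delta_k}_{\rm F}^2=r\,\tr(\Delta_k^2)$, and a short computation gives $\sum_k p_k\,\tr((\rho_k-\rho)^2)=\tau_0-2\tau+\tau=\tau_0-\tau$. Chaining these estimates produces $\tr(\Gamma_p)\le l_1\,r\,(\tau_0-\tau)$. The only genuinely delicate point is the cross‑term cancellation above; everything else is routine norm bookkeeping, and specializing to pure components ($\tau_0=1$) recovers the first branch of $\beta(r,\tau)$ appearing in Lemma~\ref{thm:classicalcov}.
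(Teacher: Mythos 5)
Your proposal is correct and follows the same skeleton as the paper's proof: the same reduction of $\tr(\Gamma_p)$ to $\tr\bigl[\bigl(\sum_i M_i\otimes M_i\bigr)\sum_k p_k\,\Delta_k\otimes\Delta_k\bigr]$ via the cross-term cancellation, the same H\"older/von Neumann step extracting $l_1$, and the same final identity $\sum_k p_k\tr(\Delta_k^2)=\tau_0-\tau$. The only genuine divergence is in how the trace norm of $\sum_k p_k\,\Delta_k\otimes\Delta_k$ is controlled. You apply the triangle inequality first, then use $\trnorm{\Delta_k\otimes\Delta_k}=\trnorm{\Delta_k}^2\le r\,\norm{\Delta_k}_{\rm F}^2$ term by term via the rank bound $\operatorname{rank}(\Delta_k)\le r$. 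The paper instead keeps the sum intact, notes that the whole operator is traceless of rank at most $r^2$, converts trace norm to Frobenius norm with the signed-eigenvalue argument of its Lemma~\ref{le:traced} (gaining the factor $\sqrt{r^2}=r$), and then needs an extra Cauchy--Schwarz on the cross terms $[\tr(\Delta_k\Delta_t)]^2\le\tr(\Delta_k^2)\tr(\Delta_t^2)$ to collapse the double sum. Both routes land on exactly the same constant $l_1 r(\tau_0-\tau)$; yours is marginally more elementary, since it avoids both the global rank computation on the doubled space and the second Cauchy--Schwarz, at the cost of committing to the per-component bound before any cancellation between different $k$ could be exploited (neither proof exploits such cancellation anyway). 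No gaps.
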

\begin{proof}
  \begin{align}
    \tr(\Gamma_p) &= \sum_i \left(\sum_k p_k \braket{M_i}_k\braket{M_i}_k - \braket{M_i}\braket{M_i}\right)\\
                  &= \sum_i \left(\sum_k p_k \left(\braket{M_i}_k - \braket{M_i}\right)^2\right)\\
                  &= \tr\left(\left(\sum_i M_i\otimes M_i\right) \left(\sum_k p_k (\rho_k - \rho)\otimes (\rho_k - \rho) \right)\right)\\ 
                  &\le l_1  \norm{\sum_k p_k (\rho_k - \rho)\otimes(\rho_k-\rho)}_1\\
                  &= 2l_1 \sum_{\lambda_i>0} \lambda_i,
  \end{align}
  where $\{\lambda_i\}$ are eigenvalues of  $\sum_k p_k (\rho_k - \rho)\otimes(\rho_k - \rho)$.

  Note that, 
  \begin{equation}
    \operatorname{rank}\left(\sum_k p_k (\rho_k- \rho)\otimes(\rho_k-\rho)\right) \le r^2.
  \end{equation}
  Following the similar procedure in Lemma~\ref{le:traced}, we know that
  \begin{equation}
    \sum_{\lambda_i>0} \lambda_i \le \frac{\sqrt{r^2}}{2} \norm{\sum_k p_k (\rho_k - \rho)\otimes(\rho_k-\rho)}_F,
  \end{equation}
  which leads to
  \begin{align}
    \tr(\Gamma_p)  &\le l_1r \norm{\sum_k p_k (\rho_k - \rho)\otimes(\rho_k-\rho)}_F \\ 
                   &= l_1 r \left(\sum_{k,t} p_kp_t [\tr ((\rho_k-\rho)(\rho_t-\rho))]^2\right)^{1/2}\\
                   &\le l_1 r \left(\sum_{k,t} p_kp_t \tr((\rho_k-\rho)^2)\tr((\rho_t-\rho)^2)\right)^{1/2}\\
                   &= l_1 r \sum_k p_k \tr((\rho_k-\rho)^2)\\
                   &= l_1 r(\tau_0-\tau).
  \end{align}
\end{proof}
In the case that $\{M_i\} $ are all dichotomic measurements, $l_1$ is no more than the size of  $\{M_i\} $. Hence, the bound in Eq.~\eqref{eq:tighttrace} is tighter than the one by trivially summing up the upper bound $r(\tau_0-\tau)$ for each term in the diagonal.

In summary, we have proved the following observation. Notice that $\tau_0 \le 1$.
\begin{theorem}\label{thm:classicalcov}
  For a given set of dichotomic measurements ${\cal M}$, a state $\rho$ with rank $r$ and purity  $\tau$,
  \begin{equation}\label{eq:classicalcov}
    \max \Gamma^c \le r(1-\tau),\ \tr(\Gamma^c) \le l_1 r(1-\tau),
  \end{equation}
  where $l_1$ is the maximal singular value of  $\sum_{M\in {\cal M}} M\otimes M$.
\end{theorem}

\begin{theorem}
  \begin{align*}
    \max \Gamma^c \le 2\sqrt{1-\tau^2},\ \tr(\Gamma^c) \le 2l_1 \sqrt{1-\tau^2},
  \end{align*}
    where $l_1$ is the maximal singular value of  $\sum_i M_i\otimes M_i$.
\end{theorem}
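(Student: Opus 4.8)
The plan is to view $\Gamma^c$ as the value of a norm‑$\le1$ linear functional applied to the difference $\beta_1-\beta_2$ of two genuine density operators on the doubled Hilbert space, and then to convert the purity of $\rho$ into a lower bound on the fidelity $F(\beta_1,\beta_2)$, so that the Fuchs--van de Graaf inequality controls $\norm{\beta_1-\beta_2}_{\tr}$ and hence $\Gamma^c$.

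Concretely, I would start from $\Gamma^c_{ij}=\sum_k p_k(\braket{M_i}_k-\braket{M_i})(\braket{M_j}_k-\braket{M_j})=\tr\big((M_i\otimes M_j)\Delta\big)$, where $\Delta:=\sum_k p_k(\rho_k-\rho)\otimes(\rho_k-\rho)$, together with the elementary identity $\Delta=\beta_1-\beta_2$ for $\beta_1:=\sum_k p_k\,\rho_k\otimes\rho_k$ and $\beta_2:=\rho\otimes\rho$, both of which are density operators, so that $\tr\Delta=0$. Since each $M_i\otimes M_j$ is both unitary and Hermitian, $\Gamma^c_{ii}=\tr\big((M_i\otimes M_i)\Delta\big)\le\big|\tr\big((M_i\otimes M_i)\Delta\big)\big|\le\norm{\Delta}_{\tr}=\norm{\beta_1-\beta_2}_{\tr}$; since $\Gamma^c\succeq0$ its largest entry is a diagonal one, so $\max\Gamma^c\le\norm{\beta_1-\beta_2}_{\tr}$, and summing the diagonal gives $\tr(\Gamma^c)=\tr\big((\sum_i M_i\otimes M_i)\Delta\big)\le l_1\norm{\beta_1-\beta_2}_{\tr}$ by Hölder duality. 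Thus both claims reduce to the single estimate $\norm{\beta_1-\beta_2}_{\tr}\le 2\sqrt{1-\tau^2}$.

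For that estimate I would invoke the Fuchs--van de Graaf inequality $\tfrac12\norm{\beta_1-\beta_2}_{\tr}\le\sqrt{1-F(\beta_1,\beta_2)^2}$ and then show $F(\beta_1,\beta_2)\ge\tau$. Joint concavity of the fidelity and its multiplicativity under tensor products give $F(\beta_1,\beta_2)=F\big(\sum_k p_k\rho_k\otimes\rho_k,\ \sum_k p_k\rho\otimes\rho\big)\ge\sum_k p_k F(\rho_k\otimes\rho_k,\rho\otimes\rho)=\sum_k p_k F(\rho_k,\rho)^2$, and $F(\rho_k,\rho)^2=\norm{\sqrt{\rho_k}\sqrt{\rho}}_{\tr}^2\ge\norm{\sqrt{\rho_k}\sqrt{\rho}}_{\rm F}^2=\tr(\rho_k\rho)$, so that $F(\beta_1,\beta_2)\ge\sum_k p_k\tr(\rho_k\rho)=\tr(\rho^2)=\tau$; substituting back finishes the proof. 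The only steps needing care are the bookkeeping identity $\Delta=\beta_1-\beta_2$ — which is exactly what turns $\norm{\Delta}_{\tr}$ into a trace distance between two states, hence accessible through fidelity — and the Schatten inequality $\norm{\cdot}_{\tr}\ge\norm{\cdot}_{\rm F}$; there is no genuine obstacle, the content being simply that replacing the rank‑based bound $\norm{X}_{\tr}\le\sqrt r\,\norm{X}_{\rm F}$ used for Theorem~\ref{thm:classicalcov2} by a fidelity bound eliminates the rank and converts $1-\tau$ into $\sqrt{1-\tau^2}$.
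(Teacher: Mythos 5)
Your proof is correct and follows essentially the same route as the paper's: both bounds are reduced to $\norm{\beta_1-\beta_2}_{\tr}$ via von Neumann/H\"older, and that trace distance is then controlled by the Fuchs--van de Graaf inequality together with a fidelity lower bound of $\tau$ (equivalently $\tau^2$ in the squared-fidelity convention the paper uses). The only cosmetic difference is how the fidelity bound is obtained --- you use joint concavity and multiplicativity to reduce to $F(\rho_k,\rho)^2\ge\tr(\rho_k\rho)$, while the paper estimates $f(\rho\otimes\rho,\sum_k p_k\rho_k\otimes\rho_k)\ge\tr\bigl((\rho\otimes\rho)\sum_k p_k\rho_k\otimes\rho_k\bigr)\ge\tau^2$ directly on the doubled space --- but both land on the same quantity.
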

\begin{proof}
  As we have observed,
  \begin{align}
    &\Gamma^c_{ij} = \braket{M_i}\braket{M_j} - \sum_k p_k \braket{M_i}_k \braket{M_j}_k = \tr\left([M_i\otimes M_j]\left[\rho\otimes\rho - \sum_k p_k \rho_k\otimes\rho_k\right]\right),\\
    &\tr(\Gamma^c) = \tr\left(\left[\sum_i M_i\otimes M_i\right]\left[\rho\otimes\rho - \sum_k p_k \rho_k\otimes\rho_k\right]\right).
  \end{align}
  Since the maximal singular values of $M_i\otimes M_j$ is $1$, Von Neumann's trace inequality, and the relation between trace norm and fidelity imply that, $\forall \epsilon\in [0,1)$,
  \begin{align}
    \left|\Gamma^c_{ij}\right| &\le  \norm{\rho\otimes\rho - \sum_k p_k \rho_k\otimes\rho_k}_{\tr} \le 2\sqrt{1-f\left(\rho\otimes\rho, \sum_k p_k \rho_k\otimes\rho_k\right)} \le 2\sqrt{1-\tau^2},
  \end{align}
  where $\epsilon\in [1/3,\tau)$ and the last inequality is from the fact that
  \begin{align}
    f(\rho\otimes\rho, \sum_k p_k \rho_k\otimes\rho_k) &= \left[ \tr\sqrt{ \sqrt{\rho\otimes\rho} \left(\sum_k p_k \rho_k\otimes \rho_k\right) \sqrt{\rho\otimes\rho}} \right]^2  \\
    &\ge \tr\left( \sqrt{\rho\otimes\rho} \left(\sum_k p_k \rho_k\otimes \rho_k\right) \sqrt{\rho\otimes\rho} \right)\\
    &\ge \tr\left( \left(\rho\otimes\rho\right) \left(\sum_k p_k \rho_k\otimes \rho_k\right) \right)\\
    &\ge \left(\sum_k p_k \tr(\rho\rho_k)\right)^2 = \tau^2.
  \end{align}
  Similarly, we can proof the result for $\tr(\Gamma^c)$.
\end{proof}

We have two extra remarks. 
Firstly, for a given mixed state $\rho = \sum_k p_k \rho_k$ whose purity is $\tau$, we have
\begin{align}
    1-\tau = 1-\tr(\rho^2) \ge \sum_k p_k \norm{\rho-\rho_k}_{\rm F}^2.
\end{align}
Hence, there is a $k$ such that $\norm{\rho-\rho_k}_{\rm F} \le \sqrt{1-\tau}$.

Secondly, Von Neumann's trace inequality and the relation between trace distance and fidelity imply that
\begin{equation}
  \tr(M(\rho-\sigma)) \le \norm{\rho-\sigma}_{\tr} \le 2\sqrt{1 - f(\rho,\sigma)}.
\end{equation}
Besides, $\tau = \tr(\rho^2) \le \sum_k p_k f(\rho,\rho_k)$ implies that there exist a $k'$ such that $f(\rho,\rho_{k'}) \ge \tau$. Consequently, $|\tr(M(\rho-\rho_{k'}))| \le 2\sqrt{(1-\tau)}$.

\section{C. Covariance inequalities}

For a given covariance matrix $\Upsilon$ whose rows and columns are divided into $k$ blocks, we have
\begin{align}\label{eq:basicnormineq}
  \sum_{i,j=1}^k  |{\Upsilon_{ij}}| &\le \sum_{i,j=1}^k \sqrt{ |{\Upsilon_{ii}}| |{\Upsilon_{jj}}| } = \left(\sum_{i=1}^k \sqrt{ |{\Upsilon_{ii}}|  }\right)^2 \le k \sum_{i=1}^k  |{\Upsilon_{ii}}| = k\tr(\Upsilon).
\end{align}
For a network $G(V,E)$, where $V$ is the set of nodes or receivers, $E = \{e\}$ is the set of sources, denote 
$  C_{ij} = \{e \mid e \supseteq (i,j)\}$, $c_{ij}$ the size of $C_{ij}$.
For a given state $\rho \in \scq$, from the covariance matrix decomposition, we know that
$  \sum_{e\in C_{ij}} \Upsilon_{e,ij} = \Gamma_{ij} - \Gamma^c_{ij}$.
\begin{align}\label{eq:covdeca}
  \sum_{i,j}  |{\Gamma_{ij}}| &\le \sum_{i,j} \left(\sum_e  |{\Upsilon_{e,ij}}| +  |{\Gamma^c_{ij}}|\right) \le \sum_e k_e \tr(\Upsilon_e) + n\tr(\Gamma^c),
\end{align}
where $k_e$ is the size of the source $e$.
In the last inequality, we have applied the inequality in Eq.~\eqref{eq:basicnormineq} for both of $\Upsilon_{e,ij}$ and $\Gamma^c_{ij}$.

Denote $k=\max_e k_e$, we have
\begin{align}\label{eq:normineq1}
   \sum_{i,j}  |{\Gamma_{ij}}| &\le k\sum_e \tr(\Upsilon_e) + n\tr(\Gamma^c) = k\tr(\Gamma) + (n-k)\tr(\Gamma^c).
\end{align}
This leads to Corollary 1 in the main text.

In the case of an IQN with only bipartite sources and at most $c_3$ tripartite sources, $\Gamma^c = 0$. Similarly, we have
 \begin{equation}
   \sum_{i,j}  |{\Gamma_{ij}}| \le 2\tr(\Gamma) + \sum_{e, k_e=3} \tr(\Upsilon_e) \le 2\tr(\Gamma) + 3m c_3, 
\end{equation}
where $m$ is the maximal number of measurements per party. Here we have made use of the inequality  $\tr(\Upsilon_e) \le k_e m$, since the covariance of the random variable in  $[-1,1]$ is no more than  $1$.

For any state $\rho=\sum_k p_k\rho_k$, where  $\rho_k$ is from an IQN with only bipartite sources and at most  $c_3$ tripartite sources, we have
\begin{align}
   \sum_{i,j}  |{\Gamma_{ij}}| & = \sum_{i,j}  \Big|{\sum_k p_k \Gamma^{(k)}_{ij} + \Gamma^c_{ij}}\Big| \\
                                        &\le \sum_k p_k \sum_{i,j}  |{\Gamma^{(k)}_{ij}}|  + \sum_{i,j} |{\Gamma^c_{ij}}|\\
                                  &\le \sum_k p_k(2\tr(\Gamma^{(k)}) + 3mc_3) + n\tr(\Gamma^c)\\
                                  &= 2\tr(\Gamma) + (n-2)\tr(\Gamma^c) + 3mc_3.
\end{align}
{Note that, $\rho_k$'s might be from different networks with different topologies, the only constraint is that there are only bipartite sources and at most  $c_3$ tripartite sources.}

In the general case, 
\begin{equation}\label{eq:generalIQN}
\sum_{i,j}  |{\Gamma^{(k)}_{ij}}| \le \sum_e k_e \tr(\Upsilon_e^{(k)}),
\end{equation}
which leads to
\begin{align}
    \sum_{i,j}  |{\Gamma_{ij}}| & \le \sum_k p_k \sum_e k_e \tr(\Upsilon_e^{(k)}) + n \tr(\Gamma^c_{ij})\nonumber\\
    & = \sum_t t \sum_k p_k \sum_{e\in E_t^{(k)}} \tr(\Upsilon_e^{(k)}) + n \tr(\Gamma^c_{ij})\nonumber\\
    &= \sum_t t x_t + n y,
\end{align}
where $x_t:= \sum_k p_k \sum_{e\in E_t^{(k)}} \tr(\Upsilon_e^{(k)})$, and $y := \tr(\Gamma^c_{ij})$. 

By definition, we have $\sum_t x_t + y = \tr(\Gamma)$.
Notice that
\begin{align}
     0\le x_t &\le  \sum_k p_k \sum_{e\in E_t^{(k)}} m t =  \sum_k p_k c_t^{(k)} m t = c_t m t,
\end{align}
with $c_t = \sum_k p_k c_t^{(k)}$ to be the average number of genuine $k$-partite sources.

\section{D. Sub-networks}
For a given network $G(V,E)$, a network state $\rho$, and a subset $S$ of $V$, denote $\rho_S$ the reduced state of $\rho$ on $S$.
By definition, the state $\rho$ can be decomposed as
\begin{align}\label{eq:cqn}
    &\rho = \sum\nolimits_k p_k \rho_k, \ \rho_k = \Big(\bigotimes\nolimits_{i\in V}\mathcal{C}_i^{(k)}\Big) \Big( \bigotimes\nolimits_{e\in E} \eta_e^{(k)} \Big),
\end{align}
where $\{p_k\}_k$ with $\sum_k p_k = 1$ and $p_k > 0$ is the global classical correlation, $\mathcal{C}_i^{(k)}$ is a local channel for the $i$-th party, $\eta_e^{(k)}$ is an entangled state distributed from the source labeled by the hyperedge $e$.

Firstly, the decomposition $\rho = \sum_k p_k \rho_k$ leads to the decomposition $\rho_S = \sum_k p_k \rho_{S,k}$ with $\rho_{S,k}$ to be the corresponding reduced state of $\rho_k$, and $\rho_k$ is an independent network state of the original network implies that $\rho_{S,k}$ is also an independent network state of the sub-network, by definition of the state from correlated quantum networks. Secondly, $\tr(M\rho_{S,k}) = \tr(M\rho_k)$ if $M$ acts only nontrivially on the subset $S$. Then the definition of the classical covariance matrices implies that $\Gamma^{(c)}_{S,ij} = \Gamma^{(c)}_{ij}$, and $\Gamma^{(c)}_{S}$ inherits all the constraints for $\Gamma^{(c)}$.
To be more explicitly,
\begin{align}
    \Gamma^{(c)}_{S,ij} &= \sum_k p_k  \langle M_i\rangle_{S,k}\langle M_j\rangle_{S,k} - \langle M_i\rangle_{S}\langle M_i\rangle_{S} = \sum_k p_k  \langle M_i\rangle_{k}\langle M_j\rangle_{k} - \langle M_i\rangle\langle M_i\rangle = \Gamma^{(c)}_{ij}.
\end{align}

\bibliographystyle{apsrev4-2}
\bibliography{network_states.bib}

\end{document}